\newcommand{\R}{\mathbb{R}}
\newcommand{\MaxCSP}{\textsc{Max\ CSP}}
\newcommand{\vect}[1]{\mathbf{#1}}
\DeclareMathOperator*{\E}{\mathbb{E}}
\DeclareMathOperator{\Supp}{Supp}
\theoremstyle{plain}
\newtheorem{theorem}{Theorem}[section]
\newtheorem{lemma}[theorem]{Lemma}
\theoremstyle{definition}
\newtheorem{definition}[theorem]{Definition}
\renewcommand{\epsilon}{\varepsilon}
\newcommand{\soundness}{s}
\newcommand{\completeness}{c}
\newcommand{\SubHull}{\textsf{SM}}
\newcommand{\SubHullSym}{\textsf{SM}^{\textsf{sym}}}
\newcommand{\C}{\mathcal{C}}
\newcommand{\Csym}{\mathcal{C}^{\textsf{sym}}}
\newcommand{\fsym}{f^{\textsf{sym}}}
\begin{document}
\title{Improved Inapproximability For Submodular Maximization}

\author{ Per Austrin\thanks{ Supported by NSF Expeditions grant CCF-0832795.}\\
  New York University
}

\maketitle

\begin{abstract}
  We show that it is Unique Games-hard to approximate the maximum of a
  submodular function to within a factor $0.695$, and that it is
  Unique Games-hard to approximate the maximum of a symmetric
  submodular function to within a factor $0.739$.  These results
  slightly improve previous results by Feige, Mirrokni and Vondr{\'a}k
  (FOCS 2007) who showed that these problems are NP-hard to
  approximate to within $3/4+\epsilon \approx 0.750$ and $5/6+\epsilon
  \approx 0.833$, respectively.
\end{abstract}


\section{Introduction}

Given a ground set $U$, consider the problem of finding a set $S
\subseteq U$ which maximizes some function $f: 2^U \rightarrow \R^+$
which is \emph{submodular}, i.e., satisfies
$$
f(S \cup T) + f(S \cap T) \le f(S) + f(T).
$$
for every $S, T \subseteq U$.  The submodularity property is also
known as the property of \emph{diminishing returns}, since it is
equivalent with requiring that, for every $S \subset T \subseteq U$
and $i \in U \setminus T$, it holds that
$$
f(T \cup \{i\}) - f(T) \le f(S \cup \{i\}) - f(S).
$$
There has been a lot of attention on various submodular optimization
problems throughout the years (e.g.,
\cite{nemhauser78analysis,lovasz83submodular,feige07maximizing}, see
also the first chapter of \cite{vondrak07submodularity} for a more
thorough introduction).  Many natural problems can be cast in this
general form -- examples include natural graph problems such as
maximum cut, and many types of combinatorial auctions and allocation
problems.

A further restriction which is also very natural to study is
\emph{symmetric} submodular functions.  These are functions which
satisfy $f(S) = f(\overline{S})$ for every $S \subseteq U$, i.e., a
set and its complement always have the same value.  A well-studied
example of a symmetric submodular maximization problem is the problem
to find a maximum cut in a graph.

Since it includes familiar NP-hard problems such as maximum cut as a
special case, submodular maximization is in general NP-hard, even in
the symmetric case.  As a side note, a fundamental and somewhat
surprising result is that submodular \emph{minimization} has a
polynomial time algorithm \cite{grotschel81ellipsoid}.

To cope with this hardness, there has been much focus on efficiently
finding good approximate solutions.  We say that an algorithm is an
$\alpha$-approximation algorithm if it is guaranteed to output a set
$S$ for which $f(S) \ge \alpha \cdot f(S_\text{OPT})$ where
$S_{\text{OPT}}$ is an optimal set.  We also allow randomized
algorithms in which case we only require that the expectation of
$f(S)$ (over the random choices of the algorithm) is at least $\alpha
\cdot f(S_\text{OPT})$.

In many special cases such as the maximum cut problem, it is very easy
to design a constant factor approximation (in the case of maximum cut
it is easy to see that a random cut is a $1/2$-approximation).  For
the general case of an arbitrary submodular functions, Feige et
al.~\cite{feige07maximizing} gave a $(2/5-o(1))$-approximation
algorithm based on local search, and proved that a uniformly random
set is a $1/2$-approximation for the symmetric case.  The
$(2/5-o(1))$-approximation has been slightly improved by Vondr{\'a}k
\cite{vondrak10submodular} who achieved a $0.41$-approximation
algorithm, which is currently the best algorithm we are aware of.

Furthermore, \cite{feige07maximizing} proved that in the \emph{(value)
  oracle model} (where the submodular function to be maximized is
given as a black box), no algorithm can achieve a ratio better than
$1/2+\epsilon$, even in the symmetric case.  However, this result says
nothing about the case when one is given an \emph{explicit
  representation} of the submodular function -- say, a graph in which
one wants to find a maximum cut.  Indeed, in the case of maximum cut
there is in fact a $0.878$-approximation algorithm, as given by a
famous result of Goemans and Williamson \cite{goemans95improved}.
In the explicit representation model, the best current hardness
results, also given by \cite{feige07maximizing}, are that it is
NP-hard to approximate the maximum of a submodular function to within
$3/4+\epsilon$ in the general case and $5/6+\epsilon$ in the symmetric
case.  

\subsection{Our Results}

In this paper we slightly improve the inapproximability results of
\cite{feige07maximizing}.  However, as opposed to
\cite{feige07maximizing} we do not obtain NP-hardness but only
hardness assuming Khot's \emph{Unique Games Conjecture} (UGC)
\cite{khot02power}.  The conjecture asserts that a problem known as
Unique Games, or Unique Label Cover, is very hard to approximate.  See
e.g.\ \cite{khot02power} for more details.  While the status of the
UGC is quite open, our results still imply that obtaining efficient
algorithms that beat our bounds would require a fundamental
breakthrough.

For general submodular functions we prove the following theorem.

\begin{theorem}
  \label{thm:main1}
  It is UG-hard to approximate the maximum of a submodular function to
  within a factor $0.695$.
\end{theorem}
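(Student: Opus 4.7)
The plan is to prove Theorem~\ref{thm:main1} using the standard dictatorship test plus Unique Games reduction template, specialized to submodular maximization. The new content is essentially a carefully designed submodular gadget; the reduction itself is by now routine. Concretely, I would first exhibit a submodular function $f \colon \{0,1\}^k \to \R^+$ together with a distribution $\mathcal{D}$ on $\{0,1\}^R$ so that the following dictatorship test goes through: for every coordinate $i \in [R]$, the dictator selection (the set $\{x : x_i = 1\}$) has expected $f$-value at least some $c$ (completeness); and every selection $g \colon \{0,1\}^R \to \{0,1\}$ all of whose coordinate influences are at most some small $\tau$ has expected $f$-value at most $s$ (soundness), with $s/c \le 0.695$. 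The function $f$ itself would be chosen by optimizing over a small parameterized family of submodular functions on a cube of low dimension $k$; the improvement over the $3/4$ bound of \cite{feige07maximizing} is possible because their NP-hardness gadget is forced into a rigid combinatorial form, whereas here we have the extra flexibility afforded by a Fourier/invariance-based soundness analysis.

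For the soundness inequality I would apply the Mossel--O'Donnell--Oleszkiewicz invariance principle: if $g$ has small influences then its expected $f$-value essentially agrees with that of a corresponding ``Gaussian'' selection, reducing the problem to an optimization over $\{0,1\}$-valued threshold functions of correlated Gaussians with matching second moments. For the resulting Gaussian problem I would pass to the multilinear extension $F$ of $f$, which for submodular $f$ is concave along nonnegative directions, and combine this concavity with a Borell-type rearrangement argument to bound the expectation. Given the dictatorship test, the reduction from Unique Games is the standard one: for each vertex $v$ of the UG instance introduce a fresh long code $\{0,1\}^R$, and for each edge $(u,v)$ with permutation $\pi_{uv}$ add a copy of the gadget $f$ applied to variables sampled from $\mathcal{D}$ and re-indexed by $\pi_{uv}$. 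Because sums and expectations of submodular functions on disjoint blocks of variables are submodular, the overall objective is a genuine submodular maximization instance on the combined ground set. Completeness follows by plugging in the dictators induced by a good UG labeling; soundness follows because a selection exceeding value $s+\epsilon$ must, on an $\epsilon/2$-fraction of edges, have a coordinate with influence above $\tau$, which by standard influence decoding produces a good UG labeling.

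The main obstacle will be the first step: constructing the gadget $f$ and proving a matching soundness bound $s/c \le 0.695$. Submodularity is a global linear constraint on $f$ that does not commute cleanly with the invariance substitution, so the analysis must route through the multilinear extension and its concavity properties rather than through $f$ directly. Unlike the situation for \MaxCSP, there is no off-the-shelf Raghavendra-style theorem one can invoke, so the extremal gadget, the precise distribution $\mathcal{D}$, and the matching Gaussian upper bound must all be exhibited by hand; identifying them in tandem is where the specific numerical ratio $0.695$ emerges, and I expect essentially all of the technical work of the proof to be concentrated here.
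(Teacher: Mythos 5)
Your high-level frame is right (dictatorship test preserving submodularity, standard long-code/UG reduction, completeness from dictators, influence decoding for soundness), but the soundness analysis you sketch is headed in the wrong direction and would be substantially harder than what is needed. The crucial structural ingredient you are missing is \emph{pairwise independence} of the test distribution. If the columns of the query matrix are drawn from a balanced pairwise independent distribution $\mu$ over $\{0,1\}^k$, then Mossel's noise-correlation bound (a corollary of the invariance principle) directly gives
$\E\bigl[\prod_{i \in S} F(x_i)\bigr] \approx \prod_{i \in S} \E[F] = p^{|S|}$
for every monomial when $F$ is low-influence, where $p = \E[F]$. Summing over the multilinear expansion of $f$, the soundness collapses to $\E_{x \sim \{0,1\}^k_{(p)}}[f(x)]$, i.e.\ the $p$-biased average of $f$, and the soundness parameter is simply $\max_{p\in[0,1]} \E_{x\sim\{0,1\}^k_{(p)}}[f(x)]$. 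There is no residual Gaussian extremal problem and no need for a Borell-type rearrangement or for concavity of the multilinear extension; that whole piece of your plan is both unnecessary and unlikely to work as stated, since it is unclear how concavity along nonnegative directions would interact with a rearrangement bound to produce a matching constant. Relatedly, there \emph{is} an essentially off-the-shelf theorem to invoke here: an adaptation of the Austrin--Mossel pairwise-independence hardness to the no-negations setting (the adaptation is needed precisely to account for $F$ having arbitrary bias $p$, which is why the soundness is a max over $p$ rather than the value at $p=1/2$).

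The second gap is that your proposal contains no concrete gadget. In the paper the gadget is pinned down very explicitly: take $\mu$ to be the Hadamard-code pairwise independent distribution on $\{0,1\}^{2^l-1}$, remove the all-zeros string from its support (otherwise $f(\mathbf{0})\ge 1$ forces soundness $1$ at $p=0$), and take $f$ to be a minimum submodular upper bound on the indicator of the remaining support points (obtained by solving a linear program minimizing the uniform average of $f$ subject to submodularity and $f\ge 1$ on the support). For $l=4$ ($k=15$) this yields completeness $\ge 15/16$ and soundness $<0.6508$ (the maximizing bias is $p\approx 0.527$, not $1/2$), giving the ratio $<0.6942 < 0.695$. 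Your proposal correctly anticipates that the extremal gadget must be found by optimization and that this is where the numerical constant comes from, but without the pairwise-independence reduction to $p$-biased averages you have no tractable criterion to optimize against, so the plan as written does not yet have a path to the stated bound.
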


In the case of symmetric functions we obtain the following bound.

\begin{theorem}
  \label{thm:main2}
  For every $\epsilon > 0 $ it is UG-hard to approximate the maximum
  of a symmetric submodular function to within a factor $709/960 +
  \epsilon < 0.739$
\end{theorem}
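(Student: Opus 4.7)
The plan is to construct a UG-hardness reduction using a dictatorship test with a symmetric submodular gadget, following the Long Code framework that is by now standard for obtaining UG-based inapproximability constants via the invariance principle.

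First I would pick a ``gadget'' function $\fsym$: a symmetric submodular function on a small Boolean cube $\{0,1\}^k$ that lies in the symmetric submodular hull $\SubHullSym$ of a well-chosen base class $\Csym$ (for example, cut or coverage functions). The design goal for $\fsym$ is to maximise the gap between its value on ``dictator-like'' inputs (completeness value $\completeness$) and its value on ``balanced / Gaussian-like'' inputs (soundness value $\soundness$), under the twin constraints of symmetry $\fsym(x)=\fsym(\overline{x})$ and submodularity.

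Second, I would carry out the standard Long Code reduction. Given a UG instance on vertex set $V$ with label set $[R]$, let $U = V \times \{0,1\}^R$; a set $S \subseteq U$ is identified with a collection of Boolean functions $f_v: \{0,1\}^R \to \{0,1\}$, one per vertex. Define
\[
F(S) \;=\; \E\bigl[\fsym\!\bigl(f_{v_1}(\mathbf{x}^{(1)}),\dots,f_{v_k}(\mathbf{x}^{(k)})\bigr)\bigr],
\]
where the expectation is over a random (hyper)edge of the UG instance and over correlated inputs $\mathbf{x}^{(1)},\dots,\mathbf{x}^{(k)}$ sampled to respect the edge's bijection. Submodularity of $F$ follows from submodularity of $\fsym$ by linearity of expectation, and symmetry of $F$ (i.e.\ $F(S)=F(\overline{S})$) follows from symmetry of $\fsym$ composed with the involution flipping every Boolean input.

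Third, I would prove completeness and soundness. If the UG instance admits a labeling $\ell$, then taking $f_v$ to be the dictator of coordinate $\ell(v)$ yields $F(S) \geq \completeness - o(1)$ by direct evaluation of $\fsym$ on correlated dictator outputs. Conversely, if no labeling satisfies more than a tiny fraction of constraints, then the usual influence-decoding argument forces every $f_v$ to have no significantly influential coordinates (after conditioning on the UG projection), so the invariance principle of Mossel--O'Donnell--Oleszkiewicz lets us replace $F(S)$ by its Gaussian analogue up to $o(1)$; the maximum value of $\fsym$ on the corresponding correlated Gaussian space is then at most $\soundness + o(1)$, yielding a UG-hardness ratio of $\soundness/\completeness$.

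The hard part is the first step: pinning down a gadget $\fsym \in \SubHullSym(\Csym)$ together with correlation parameters for which the ratio $\soundness/\completeness$ is exactly $709/960$. This is a finite-dimensional but genuinely restrictive convex optimisation, because the symmetry relation $\fsym(x)=\fsym(\overline{x})$ couples antipodal values and strictly shrinks the feasible set relative to the non-symmetric case; this is precisely why Theorem~\ref{thm:main2} gives the weaker ratio $709/960 \approx 0.739$ compared with the $0.695$ of Theorem~\ref{thm:main1}. Proving that $709/960$ is actually attained (and not just a convenient upper bound) will likely require an LP/SDP duality argument over $\SubHullSym(\Csym)$ to certify that the chosen gadget is extremal.
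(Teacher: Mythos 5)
Your high-level framework (dictatorship test, Long Code reduction, invariance principle) is aligned with the paper's, but the proposal is missing nearly all of the concrete ingredients that make the argument go through, and in a couple of places the logic is inverted.

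The most important missing idea is \emph{pairwise independence}. The paper's hardness result (Theorem~\ref{thm:ughardness}) is a $\MaxCSP^+$ adaptation of Austrin--Mossel: for any $f$ and any \emph{balanced pairwise independent} distribution $\mu$ over $\{0,1\}^k$, it is UG-hard to distinguish value $\ge \completeness_\mu(f) - \epsilon$ from value $\le \soundness(f) + \epsilon$, where the completeness is the expectation of $f$ under $\mu$ and the soundness is the \emph{maximum over all biases $p$} of the expectation of $f$ under the $p$-biased product distribution. Your proposal never mentions pairwise independence, and your soundness is stated against a ``correlated Gaussian space'' rather than against the supremum of $p$-biased product distributions. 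This distinction matters: because the problem does not allow negated literals, one cannot force assignments to be balanced, so the soundness must be taken over \emph{all} $p$, not just $p=1/2$. (The paper spends Lemma~\ref{lemma:symmetrybias} proving that, for its symmetric gadget, the max \emph{is} attained at $p=1/2$, precisely because this is not automatic.) Your proposal also does not specify what $\Csym$ is. In the paper it is a concrete set of $2^l - 2$ binary strings: the support of a Hadamard-type pairwise independent distribution on $k = 2^{l-1}$ coordinates (indexed by odd-cardinality subsets of $[l]$), with the two strings $\vect{0}$ and $\vect{1}$ deleted. Deleting $\vect{0}$ and $\vect{1}$ is essential --- otherwise any submodular upper bound on the indicator of the support would have $f(\vect{0}) = f(\vect{1}) = 1$, driving the soundness to $1$ --- and it is what forces the completeness down to $1 - 2^{1-l}$, which for $l = 5$ is the denominator $15/16$ in $709/960 = (709/1024)/(15/16)$.

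Finally, your last paragraph inverts the logic of the construction. You propose to ``pin down'' a gadget whose ratio is ``exactly $709/960$'' and to certify that this ratio is ``attained'' via LP/SDP duality. But $709/960$ is not a target that needs to be certified extremal; it is simply the output of the computation for $l = 5$. One solves the LP $\SubHullSym_{1/2}(\Csym_5)$ (minimize the average of $f$ subject to $f \ge 1$ on $\Csym_5$, $f$ symmetric, $f$ submodular), obtains the explicit function $\fsym_5$ on $\{0,1\}^{16}$ described in the paper, verifies submodularity and the monotonicity hypothesis of Lemma~\ref{lemma:symmetrybias} directly, and computes $\soundness(\fsym_5) = \soundness_{1/2}(\fsym_5) = 709/1024$. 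No duality certificate of optimality over the family of gadgets is needed --- exhibiting a single submodular $f$ with completeness $15/16$ and soundness $709/1024$ suffices, and Theorem~\ref{thm:ughardness} does the rest. Your proposal, as written, would not lead you to this gadget or this number.
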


These improved inapproximability results still fall short of coming
close to the $1/2$-barrier in the oracle model.  Unfortunately, while
marginal improvments of our results may be possible, we do not believe
that our approach can come close to a factor $1/2$.  It remains a
challenging and interesting open question to determine the exact
approximability of explicitly represented submodular functions.

\subsection{Our Approach}
\label{sec:techniques}

As in \cite{feige07maximizing}, the starting point of our approach is
hardness of approximation for constraint satisfaction problems (CSPs),
an area which, due to much progress during the last 15 years, is today
quite well understood.  Here it is useful to take a slightly different
viewpoint.  Instead of thinking of the family of subsets $2^U$ of $U$,
we consider the set of binary strings $\{0,1\}^n$ of length $n = |U|$,
indentified with $2^U$ in the obvious way.  These views are of course
equivalent and throughout the paper we shift between them depending on
which view is the most convenient.

For a string $x \in \{0,1\}^n$ and a $k$-tuple $C \in [n]^k$ of
indices, let $x_C \in \{0,1\}^k$ denote the string of length $k$
which, in position $j \in [k]$ has the bit $x_{C_j}$.  Now, given a
function $f: \{0,1\}^k \rightarrow \R^+$, we define the problem
$\MaxCSP^+(f)$ as follows.  An instance of $\MaxCSP^+(f)$ consists of
a list of $k$-tuples of variables $C_1, \ldots, C_m \in [n]^k$.
These specify a function $F: \{0,1\}^n \rightarrow \R^+$ by
$$
F(x) = \frac{1}{m} \sum_{i=1}^m f(x_{C_i})
$$
and the problem is to find an $x \in \{0,1\}^n$ to maximize $x$.

Note that if $f$ is submodular then every instance $F$ of $\MaxCSP^+(f)$
is submodular and $\MaxCSP^+(f)$ is a special case of the submodular
maximization problem.  

Next, we use a variation of a result by the author and Mossel
\cite{austrin09approximation}.  The result of
\cite{austrin09approximation} is for CSPs where one allows
\emph{negated literals}\footnote{Where each ``constraint''
  $f(x_{C_i})$ of $F$ is of the more general form $f(x_{C_i} + l_i)$
  for some $l_i \in \{0,1\}^k$, where $+$ is interpreted as addition
  over $GF(2)^k$.}, which can not be allowed in the context of
submodular maximization.  However, in Theorem~\ref{thm:ughardness} we
give a simple analogue of the result of \cite{austrin09approximation}
for the $\MaxCSP^+(f)$ setting.

Roughly speaking the hardness result says the following.  Suppose that
there is a \emph{pairwise independent} distribution $\mu$ such that
the expectation of $f$ under $\mu$ is at least $c$, but that the
expectation of $f$ under the \emph{uniform} distribution is at most
$s$.  Then $\MaxCSP^+(f)$ is UG-hard to approximate to within a factor
of $s/c$.

The hardness result suggests the following natural approach: take a
pairwise independent distribution $\mu$ with small support, and let
$\vect{1}_\mu: \{0,1\}^k \rightarrow \{0,1\}$ be the indicator
function of the support of $\mu$.  Then take $f$ to be a ``minimum
submodular upper bound'' to $\vect{1}_\mu$, by which we mean a
submodular function satisfying $f(x) \ge \vect{1}_\mu(x)$ for every
$x$ while having small expectation under the uniform distribution.

To make this plan work, there are a few small technical complications
(hidden in the ``roughly speaking'' part of the description of the
hardness result above) that we need to overcome, making the final
construction slightly more complicated.  Unfortunately, understanding
the ``minimum submodular upper bound'' of the families of indicator
functions that we use appears difficult, and to obtain our results, we
resort to explicitly computing the resulting submodular functions for
small $k$.

Let us compare our approach with that of \cite{feige07maximizing}.  As
mentioned above, their starting point is also hardness of
approximation for constraint satisfaction.  However, here their
approach diverges from ours: they construct a gadget reduction from
the \textsc{$k$-Lin} problem (linear equations $\bmod$ $2$ where each
equation involves only $k$ variables).  This gadget introduces two
variables $x_i^0$ and $x_i^1$ for every variable $x_i$ in the
\textsc{$k$-Lin} instance, and each equation $x_{i_1} \oplus \ldots
\oplus x_{i_k} = b$ is replaced by some submodular function $f$ on the
$2k$ new variables corresponding to the $x_{i_j}$'s.  The analysis
then has to make sure that there is always an optimal assignment where
for each $i$ exactly one of $x_i^0$ and $x_i^1$ equals $1$, which for
the inapproximability of $3/4$ becomes quite delicate.  In our
approach, which we feel is more natural and direct, we don't run into
any such issues.

\subsection{Organization}

In Section~\ref{sect:notation} we set up some more notation that we
use throughout the paper and give some additional background.  In
Section~\ref{sect:hardness} we describe the hardness result that is
our starting point.  In Section~\ref{sect:construction} we describe in
more detail the construction outlined above, and finally, in
Section~\ref{sect:concrete}, we describe how to obtain the concrete
bounds given in Theorems~\ref{thm:main1} and \ref{thm:main2}.

\section{Notation and Background}
\label{sect:notation}

Throughout the paper, we identify binary strings in $\{0,1\}^n$ and
subsets of $[n]$ in the obvious way.  Analogously to the notation
$|S|$ and $\overline{S}$ for the cardinality and complement of a
subset $S \subseteq [n]$ we use $|x|$ and $\overline{x}$ for the
Hamming weight and coordinatewise complement of a string $x \in
\{0,1\}^n$.

\subsection{Submodularity}

Apart from the two definitions in the introduction, a third
characterization of submodularity is that a function $f: 2^X
\rightarrow \R^+$ is submodular if and only if
\begin{equation}
  \label{eqn:simplesubmod}
  f(S) - f(S \cup \{i\}) - f(S \cup\{j\}) + f(S \cup \{i\} \cup \{j\}) \le 0
\end{equation}
for every $S \subseteq X$, and $i,j \in X \setminus S$, $i \ne j$.  It
is straightforward to check that this condition is equivalent to the
diminishing returns property mentioned in the introduction.

\subsection{Probability}

\newcommand{\biaspk}{\{0,1\}^k_{(p)}}

For $p \in [0,1]$, we use $\biaspk$ to denote the
$k$-dimensional boolean hypercube with the $p$-biased product
distribution, i.e., if $x$ is a sample from $\biaspk$ then the
probability that the $i$'th coordinate $x_i = 1$ is $p$, independently
for each $i \in [k]$. 

We abuse notation somewhat by making no distinction between
probability distribution functions $\mu: \{0,1\}^k \rightarrow [0,1]$
and the probability space $(\{0,1\}^k, \mu)$ for such $\mu$.  Hence we
write, e.g., $\mu(x)$ for the probability of $x \in \{0,1\}^k$ under
$\mu$ and $\E_{x \sim \mu}[ f(x) ]$ for the expectation of a function
$f: \{0,1\}^k \rightarrow \R$ under $\mu$.

A distribution $\mu$ over $\{0,1\}^k$ is \emph{balanced pairwise
  independent} if every two-dimensional marginal distribution of $\mu$
is the uniform distribution, or formally, if for every $1 \le i < j
\le n$ and $b_1, b_2 \in \{0,1\}$, it holds that $$\Pr_{x \sim
  \mu}[x_i = b_1 \wedge x_j = b_2] = 1/4.$$

Recall that the support $\Supp(\mu)$ of a distribution $\mu$ over
$\{0,1\}^k$ is the set of strings with non-zero probability under
$\mu$, i.e., $\Supp(\mu) = \{\,x \in \{0,1\}^k\,:\,\mu(x) > 0\,\}$.

We conclude this section with a lemma that will be useful to us.

\begin{lemma}
  \label{lemma:symmetrybias}
  Let $f: \{0,1\}^k \rightarrow \R^+$ be a symmetric set function.
  For $t \in [0,k]$ let $a(t)$ denote the average of $f$ on strings of
  weight $x$, $a(t) = \frac{1}{{k \choose t}} \sum_{|x| = t} f(x)$.
  If $a$ is monotonely nondecreasing in $[0,k/2]$, then the maximum average of $f$ under
  any $p$-biased distribution is achieved by the uniform distribution.
  I.e.,
  $$
  \max_{p \in [0,1]} \E_{x \sim \biaspk} [f(x)] = 2^{-x} \sum_{x \in \{0,1\}} f(x)
  $$
\end{lemma}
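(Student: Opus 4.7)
The plan is to view $g(p) := \E_{x \sim \biaspk}[f(x)]$ as a function of $p \in [0,1]$ and prove $g(p) \le g(1/2)$ for all $p$. Grouping terms by Hamming weight, $g$ depends on $f$ only through the averages $a(t)$:
$$
g(p) = \sum_{t=0}^{k} \binom{k}{t} p^t (1-p)^{k-t} a(t).
$$
Applying the symmetry hypothesis $f(x) = f(\overline{x})$ to pairs of strings of complementary weight gives $a(t) = a(k-t)$, i.e.\ the sequence $(a(0), \ldots, a(k))$ is palindromic; this also implies $g(p) = g(1-p)$, so it suffices to prove $g$ is nondecreasing on $[0, 1/2]$.

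Next I would compute $g'(p)$. Using $t \binom{k}{t} = k \binom{k-1}{t-1}$ and $(k-t)\binom{k}{t} = k \binom{k-1}{t}$ and reindexing, the derivative simplifies to
$$
g'(p) = k \sum_{t=0}^{k-1} \binom{k-1}{t} p^t (1-p)^{k-1-t} \, \Delta a(t) = k \, \E\bigl[\Delta a(Y)\bigr],
$$
where $\Delta a(t) := a(t+1) - a(t)$ and $Y$ is binomial with parameters $k-1$ and $p$. The palindrome property of $a$ translates into the antisymmetry $\Delta a(k-1-t) = -\Delta a(t)$, while the monotonicity hypothesis gives $\Delta a(t) \ge 0$ for $0 \le t < k/2$.

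It remains to show $\E[\Delta a(Y)] \ge 0$ when $p \le 1/2$. I would pair each index $t < k/2$ with its antipode $k-1-t$ (when $k$ is odd, the central index $t = (k-1)/2$ satisfies $\Delta a(t) = 0$ by antisymmetry and can be discarded):
$$
\E\bigl[\Delta a(Y)\bigr] = \sum_{0 \le t < k/2} \Delta a(t) \bigl(\Pr[Y = t] - \Pr[Y = k-1-t]\bigr).
$$
The binomial monotone likelihood ratio gives
$$
\frac{\Pr[Y = t]}{\Pr[Y = k-1-t]} = \left(\frac{1-p}{p}\right)^{k-1-2t},
$$
which is at least $1$ whenever $p \le 1/2$ and $2t \le k-1$. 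Hence every summand is a product of nonnegatives, so $g'(p) \ge 0$ on $[0, 1/2]$ and the lemma follows. I do not anticipate a real obstacle; the only point deserving some care is the treatment of the central index for odd $k$, and this is handled automatically by the antisymmetry of $\Delta a$.
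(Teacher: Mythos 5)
Your proof is correct. Both arguments ultimately rest on the same endgame: reducing to $p \le 1/2$ via the symmetry $g(p)=g(1-p)$, and then comparing binomial probabilities $\Pr[Y=t]$ versus $\Pr[Y=k-1-t]$ with $Y\sim\text{Binomial}(k-1,p)$ via the ratio $\left(\tfrac{1-p}{p}\right)^{k-1-2t}\ge 1$. What you do differently is the route to that point: the paper first reduces to the case where $f$ is the indicator of a band $k/2-d\le |x|\le k/2+d$ (expressing a monotone $a$ as a nonnegative combination of such indicators), then splits the band into a difference of two monotone threshold functions and invokes the Margulis--Russo lemma to obtain the derivative of the $p$-biased expectation. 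You skip both of those reductions: you differentiate the binomial average directly, obtaining $g'(p)=k\,\E[\Delta a(Y)]$ in one calculational step, and then exploit the antisymmetry $\Delta a(k-1-t)=-\Delta a(t)$ to pair terms. Your approach is more elementary and self-contained (no Margulis--Russo, no decomposition into bands) and works directly with the weight profile $a$, whereas the paper's version makes the monotonicity structure visible by passing through monotone threshold functions at the cost of importing an auxiliary lemma. Both are valid; yours is arguably the cleaner derivation of the derivative identity, and you correctly handle the odd-$k$ central index by observing $\Delta a((k-1)/2)=0$ follows from antisymmetry (the paper instead assumes without loss of generality that $k/2-d$ is an integer, which is a slightly rougher dismissal of the same boundary issue).
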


This intuitively obvious lemma is probably well known but as we do not
know a reference we give a proof here.

\begin{proof}
  First, we note that without loss of generality we may assume that
  $f(x)$ is the indicator function of the event $k/2-d \le |x| \le
  k/2+d$ for some $d \in [0,k/2]$.  This is because any $f$ as in the
  statement of the lemma can be written as a nonnegative linear
  combination of such indicator functions for different $d$ and if the
  average of each of these indicator functions is maximized for
  $p=1/2$ then so is the average of $f$.

  Define $f_1: \{0,1\}^k \rightarrow \{0,1\}$ as the indicator
  function of the event $|x| \ge k/2-d$ and $f_2: \{0,1\}^k
  \rightarrow \{0,1\}$ as the indicator function of the event $|x| >
  k/2+d$, so that $f(x) = f_1(x) - f_2(x)$.  Let $e_j(p)$ denote the
  average of $f_j$ under the $p$-biased distribution and $e(p) =
  e_1(p)-e_2(p)$ the average of $f$ under the $p$-biased distribution.

  We will prove that $e'(p) \ge 0$ for $p \le 1/2$ (this is sufficient
  since we have $e(p) = e(1-p)$ for symmetry reasons), or in other
  words that $e_1'(p) \ge e_2'(p)$.  Now, $f_1$ and $f_2$ are
  indicator functions of monotone events and therefore $e_1'(p)$ and
  $e_2'(p)$ can be computed by the Margulis-Russo Lemma
  \cite{russo82approximate,margulis74probabilistic}:

  \begin{lemma}{(Margulis-Russo)}
    Let $f: \{0,1\}^k \rightarrow \{0,1\}$ be monotone.  For $x \in
    \{0,1\}^k$ and $i \in [k]$ let $x \setminus i$ denote $x$ with the
    $i$'th coordinate set to $0$, and let $x \cup i$ denote $x$ with
    the $i$'th coordinate set to $1$.  Then
    $$\frac{\partial}{\partial p} \E_{x \sim \biaspk}[f(x)] =
    \sum_{i=1}^k \Pr_{x \sim \biaspk}[f(x \setminus i) = 0 \wedge
    f(x \cup i) = 1].$$
  \end{lemma}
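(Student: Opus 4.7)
The plan is to differentiate the explicit formula for the $p$-biased expectation and reorganize the resulting sum coordinate by coordinate. Write
$$
\E_{x \sim \biaspk}[f(x)] = \sum_{x \in \{0,1\}^k} f(x)\, p^{|x|}(1-p)^{k-|x|}
$$
and apply $\frac{d}{dp}$ inside the sum. The derivative of the weight $p^{|x|}(1-p)^{k-|x|}$ equals
$|x| p^{|x|-1}(1-p)^{k-|x|} - (k-|x|) p^{|x|}(1-p)^{k-|x|-1}$,
and I would rewrite $|x|=\sum_{i:\,x_i=1} 1$ and $k-|x|=\sum_{i:\,x_i=0} 1$ so that the derivative splits into a sum over coordinates $i \in [k]$.

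Next, for each fixed $i$, I would pair the two contributions coming from the strings $x$ with $x_i=1$ and $x_i=0$ that agree on $[k]\setminus\{i\}$. In the notation of the lemma, these are exactly $x \cup i$ and $x \setminus i$. Both paired terms end up multiplied by the same quantity $p^{t}(1-p)^{k-1-t}$, where $t$ is the number of $1$'s among the other $k-1$ coordinates, so after this grouping the contribution of coordinate $i$ becomes
$$
\sum_{y \in \{0,1\}^{[k]\setminus\{i\}}} \bigl(f(y\cup i) - f(y\setminus i)\bigr)\, p^{|y|}(1-p)^{k-1-|y|}.
$$
The factor $p^{|y|}(1-p)^{k-1-|y|}$ is exactly the $p$-biased probability of the restriction of $x$ to $[k]\setminus\{i\}$ being $y$.

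Finally, I would use monotonicity of $f$: the difference $f(y\cup i)-f(y\setminus i)$ is in $\{0,1\}$ and equals $1$ precisely on the event $\{f(x\setminus i)=0 \wedge f(x\cup i)=1\}$ (this event is measurable with respect to the coordinates other than $i$). Hence the inner sum equals $\Pr_{x\sim\biaspk}[f(x\setminus i)=0 \wedge f(x\cup i)=1]$, and summing over $i$ yields the stated identity.

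There is no real obstacle; the only non-mechanical point is the pairing step, where one must verify that the $p$-derivative cancellations conspire to produce the single common factor $p^{|y|}(1-p)^{k-1-|y|}$ in front of $f(y\cup i)-f(y\setminus i)$. Once monotonicity is invoked to convert this difference into an indicator, the Margulis--Russo formula drops out.
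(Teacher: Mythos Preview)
Your argument is correct and is the standard direct computation: differentiate the explicit $p$-biased average, split the derivative as a sum over coordinates, pair the two contributions for each coordinate $i$ to obtain $\bigl(f(y\cup i)-f(y\setminus i)\bigr)p^{|y|}(1-p)^{k-1-|y|}$, and then use monotonicity to turn the difference into the indicator of the pivotal event. The pairing step works exactly as you describe: for $x$ with $x_i=1$ and $|x|=t+1$ the first term gives $f(y\cup i)\,p^{t}(1-p)^{k-1-t}$, while for $x$ with $x_i=0$ and $|x|=t$ the second term gives $-f(y\setminus i)\,p^{t}(1-p)^{k-1-t}$.

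There is nothing to compare against: the paper does not prove the Margulis--Russo lemma but simply quotes it as a classical result with references. Your write-up is a self-contained proof of that quoted statement.
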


  \newcommand{\biaspkm}{\{0,1\}^{k-1}_{(p)}} 

  Applying Margulis-Russo to the monotone functions $f_1$ and $f_2$,
  and using that they depend only on $|x|$ it follows that
  (assuming without loss of generality that $d$ is such that $k/2-d$
  is an integer):
  \begin{align*}
    e_1'(p) &=  \Pr_{x \sim \biaspkm}[|x| = k/2-d-1] \cdot k &
    e_2'(p) &=  \Pr_{x \sim \biaspkm}[|x| = k/2+d] \cdot k 
  \end{align*}
  Hence to prove $e'_1(p) \ge e'_2(p)$ we have to prove that, for
  every $p \le 1/2$
  $$
  \Pr_{x \sim \biaspkm}\left[|x| = \frac{k-1}{2}-(d+\frac{1}{2})\right] \ge \Pr_{x \sim
    \biaspkm}\left[|x| = \frac{k-1}{2}+(d+\frac{1}{2})\right].
  $$
  This in turn follows immediately from $\Pr_{x \sim \biaspkm}[|x| = w] =
  {k-1 \choose w} p^w (1-p)^{k-1-w}$ since:
  $$
  \frac{\Pr_{x \sim \biaspkm}\left[|x| = \frac{k-1}{2}-(d+\frac{1}{2})\right]}{\Pr_{x \sim
    \biaspkm}\left[|x| = \frac{k-1}{2}+(d+\frac{1}{2})\right]} = 
\frac{ p^{\frac{k-1}{2}-(d+\frac{1}{2})} (1-p)^{\frac{k-1}{2}+(d+\frac{1}{2})}}{p^{\frac{k-1}{2}+(d+\frac{1}{2})}(1-p)^{\frac{k-1}{2}-(d+\frac{1}{2})}} = \left(\frac{1-p}{p}\right)^{2d+1} \ge 1.
  $$
\end{proof}

\section{Hardness from Pairwise Independence}
\label{sect:hardness}

In this section we state formally the variation of the hardness result
of \cite{austrin09approximation} that we use.  We first define the
parameters which control the inapproximability ratio that we obtain.

\begin{definition}
  Let $f: \{0,1\}^k \rightarrow \R^+$ be a submodular function.

  We define the \emph{completeness $\completeness_\mu(f)$ of $f$ with
    respect to a distribution $\mu$} over $\{0,1\}^k$ by the expected
  value of $f$ under $\mu$, i.e.,
  $$
  \completeness_\mu(f) := \E_{x \sim \mu}[f(x)]
  $$

  We define the \emph{soundness $\soundness_p(f)$ of $f$ with respect
    to bias $p$} by the expected value of $f$ under the $p$-biased
  distribution, i.e.,
  $$
  \soundness_p(f) := \E_{x \sim \biaspk}[f(x)].
  $$

  Finally, we define the \emph{soundness $s(f)$ of $f$} by its maximum
  soundness with respect to any bias, i.e.,
  $$
  \soundness(f) := \max_{p \in [0,1]} \soundness_p(f)
  $$
\end{definition}

We can now state the hardness result.

\begin{theorem}
  \label{thm:ughardness}
  Let $\mu$ be a balanced pairwise independent distribution over
  $\{0,1\}^k$.  Then for every objective function $f: \{0,1\}^k
  \rightarrow \R^+$ and $\epsilon > 0$, given a $\MaxCSP^+(f)$
  instance $F: \{0,1\}^n \rightarrow \R^+$ it is UG-hard to distinguish
  between the cases:
  \begin{description}
  \item[Yes:] There is an $S \subseteq X$ such that $F(S) \ge
    \completeness_\mu(f) - \epsilon$.
  \item[No:] For every $S \subseteq X$ it holds that $F(S) \le
    \soundness(f) + \epsilon$.
  \end{description}
\end{theorem}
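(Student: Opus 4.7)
The plan is to adapt the standard Unique Games / Long Code dictator-test template of \cite{austrin09approximation} to the $\MaxCSP^+$ setting, where crucially no negations of variables may be introduced. Given a UG instance with vertex set $U$, label set $[L]$, and bijections $\pi_e$, the $\MaxCSP^+(f)$ instance has one Long Code block of variables $\{x_{(u,y)}\}_{y \in \{0,1\}^L}$ for each $u \in U$. A constraint is generated by: drawing a uniformly random pivot vertex $u$ together with $k$ random neighbors $v_1,\ldots,v_k$ and their permutations $\pi_1,\ldots,\pi_k$; drawing $k$ strings $z^{(1)},\ldots,z^{(k)} \in \{0,1\}^L$ coordinate-wise from $\mu$ (so for each $i \in [L]$ the tuple $(z^{(1)}_i,\ldots,z^{(k)}_i)$ is an independent $\mu$-sample); and asking $f$ on the $k$ bits $x_{(v_j, z^{(j)} \circ \pi_j)}$. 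A small amount of $\rho$-noise is applied to each block to kill high-degree contributions, as is standard.

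For the completeness case, consider the honest dictator assignment $x_{(u,y)} = y_{\ell(u)}$ corresponding to a UG labeling $\ell$ satisfying a $1-o(1)$ fraction of edges. On any constraint whose pivot and all $k$ neighbors are consistent with $\ell$, the $k$ bits fed to $f$ are exactly $z^{(1)}_{\ell(u)},\ldots,z^{(k)}_{\ell(u)}$, which form a single fresh draw from $\mu$; averaging over constraints yields value at least $\completeness_\mu(f)-\epsilon$ as required.

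For the soundness case, identify each block with the function $g_u : \{0,1\}^L \to \{0,1\}$ given by $g_u(y) = x_{(u,y)}$. The expected value of a constraint becomes
\[
\E_{u,v_1,\ldots,v_k}\;\E_{z^{(1)},\ldots,z^{(k)} \sim \mu^{\otimes L}}\Bigl[f\bigl(g_{v_1}(z^{(1)} \circ \pi_1),\ldots,g_{v_k}(z^{(k)} \circ \pi_k)\bigr)\Bigr].
\]
By the standard Unique Games averaging argument, one may restrict attention to blocks in which no coordinate has non-negligible low-degree influence (otherwise the UG instance can be decoded via the influential coordinates, yielding a labeling that refutes the UGC). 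For such ``smooth'' blocks, Mossel's invariance principle, combined with the fact that $\mu$ is balanced pairwise independent (so the two-dimensional marginals seen by invariance are uniform and hence induce independence on the Gaussian/hybrid side), implies that the inner expectation depends only on the means $p_j = \E[g_{v_j}]$ and is arbitrarily close to $\E[f(X_1,\ldots,X_k)]$ where $X_j$ is an independent $p_j$-biased bit. A final convexity step then bounds the overall expectation by $\max_{p_1,\ldots,p_k} \E[f(X_1,\ldots,X_k)]$; because the $v_j$'s are exchangeable uniformly random neighbors of a uniformly random pivot, this maximum collapses to a single bias $p$, yielding the upper bound $\soundness(f)+\epsilon$.

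The main obstacle, relative to \cite{austrin09approximation}, is precisely this last point: without negations the Long Code cannot be folded, so the functions $g_u$ may have arbitrary bias and the invariance principle delivers a biased product distribution on the right-hand side rather than the uniform one. This is exactly why the soundness parameter must be defined as $\soundness(f) = \max_p \soundness_p(f)$. The remaining ingredients — noise smoothing to ensure invariance applies, the decoding of an influential coordinate in a given block into a UG label via the corresponding $\pi_j$, and the reduction of approximate-satisfaction of the constraint measure $F$ to approximate-satisfaction of UG — are by now routine and can be transplanted essentially verbatim from \cite{austrin09approximation}.
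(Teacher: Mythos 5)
Your overall strategy is the same as the paper's: use Mossel's invariance/correlation-bound machinery, observe that without negated literals the Long Code cannot be folded so the tested functions may have arbitrary bias, and therefore land on $\soundness(f) = \max_p \soundness_p(f)$ as the right soundness parameter. The paper gives only the dictatorship test (picking columns of a $k\times n$ matrix from $\mu' = (1-\epsilon)\mu + \epsilon\,\mathcal{U}$ and accepting with probability $f(F(x_1),\ldots,F(x_k))$), and explicitly defers the translation to UG-hardness to the standard Raghavendra/KKMO machinery; you instead describe the full reduction directly. Both presentations are legitimate.

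However, your soundness argument has a genuine gap. You bound the expectation by $\max_{p_1,\ldots,p_k} \E[f(X_1^{p_1},\ldots,X_k^{p_k})]$ and then assert that ``because the $v_j$'s are exchangeable, this maximum collapses to a single bias $p$.'' That last step is false in general: taking $f(x_1,x_2) = x_1 \oplus x_2$ (which is submodular), $\max_{p_1,p_2}\E[f] = 1$ at $(p_1,p_2)=(0,1)$, whereas $\max_p \E_{x\sim p^{\otimes 2}}[f] = 1/2$. Exchangeability of the distribution over the $v_j$'s says nothing about where an unconstrained maximum over $(p_1,\ldots,p_k)$ is attained. The correct argument does \emph{not} pass to a max at all: since the neighbors $v_1,\ldots,v_k$ are drawn i.i.d.\ given the pivot $u$, and the $X_j$'s are conditionally independent given the biases, the \emph{unconditional} law of $(X_1,\ldots,X_k)$ is exactly the product $\bar p_u$-biased distribution where $\bar p_u = \E_v[\E[g_v]]$, and therefore the inner expectation equals $\soundness_{\bar p_u}(f) \le \soundness(f)$ directly. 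The paper sidesteps this entirely by running the test on a single function $F$ (in the translation, one tests the averaged function $F_u(y) = \E_v[g_v(y\circ\pi_{uv})]$), so there is only one bias from the start. One further minor difference: the paper smooths by replacing $\mu$ with $\mu' = (1-\epsilon)\mu + \epsilon\,\mathcal{U}$, which it needs because the cited correlation bound requires $\mu'$ to have full support; your $\rho$-noise smoothing is a different (and also standard) mechanism but you would still need to ensure the support condition before invoking Mossel's theorem.
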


The proof of Theorem~\ref{thm:ughardness} follows the proof of
\cite{austrin09approximation} almost exactly.  For the sake of
completeness, we give a bare bones proof in
Appendix~\ref{sect:hardnessproof}.

Consequently, for any submodular function $f$ and pairwise independent
distribution $\mu$ with all marginals equal, it is UG-hard to
approximate $\MaxCSP^+(f)$ to within a factor
$\soundness(f)/\completeness_\mu(f) + \epsilon$ for every $\epsilon >
0$.  Note also that the \textbf{No} case is the best possible: there
is a trivial algorithm which finds a set such that $F(S) \ge
\soundness(f)$ for every $F$, by simply letting each input be $1$ with
probability $p$ for the $p$ that maximizes $\soundness_p(f)$.

As a somewhat technical remark, we mention that
Theorem~\ref{thm:ughardness} still holds if $\mu$ is not required to
be balanced -- it suffices that all the one-dimensional marginal
probabilities $\Pr_{x \sim \mu}[x_i = 1]$ are identical, not
necessarily equal to $1/2$ as in the balanced case.  We state the
somewhat simpler form since that is sufficient to obtain our results
for submodular functions and since that makes it more similar to the
result of \cite{austrin09approximation}, which requires the
distribution $\mu$ to be balanced.

Let us then briefly discuss the difference between
Theorem~\ref{thm:ughardness} and the main result of
\cite{austrin09approximation}.  First, the result of
\cite{austrin09approximation} only applies in the more general setting
when one allows negated literals, which is why it can not be used to
obtain inapproximability for submodular functions.  On the other hand,
this more general setting allows for a stronger conclusion: in the
\textbf{No} case, \cite{austrin09approximation} achieves a soundness
of $\soundness_{1/2}(f)+\epsilon$ which in general can be much smaller
than $\soundness(f)$.  As an example, consider the case when $f:
\{0,1\}^3 \rightarrow \{0,1\}$ is the logical OR function on $3$ bits.
In this case the $\MaxCSP^+(f)$ problem is of course trivial -- the
all-ones assignment satisfies all constraints -- and $\soundness(f) =
1$, whereas $\soundness_{1/2}(f) = 7/8$.  Letting $\mu$ be the uniform
distribution on strings of odd parity (it is readily verified that
this is a balanced pairwise independent distribution) one gets
$\completeness_\mu(f) = 1$, showing that the \textsc{Max $k$-Sat}
problem is hard to approximate to within $7/8 + \epsilon$.

\section{The Construction}
\label{sect:construction}

In this section we make formal the construction outlined in
Section~\ref{sec:techniques}.

Theorem~\ref{thm:ughardness} suggests the following natural approach:
pick a pairwise independent distribution $\mu$ over $\{0,1\}^k$ and
let $\vect{1}_\mu: \{0,1\}^k \rightarrow \{0,1\}$ be the indicator
function of the support of $\mu$.  Then take $f$ to be a ``minimum
submodular upper bound'' to $\vect{1}_\mu$, by which we mean a
submodular function satisfying $f(x) \ge \vect{1}_\mu(x)$ for every
$x$ while having $\soundness(f)$ as small as possible (whereas
$\completeness_\mu(f)$ is clearly at least $1$).  Note that the
smaller the support of $\mu$, the less constrained $f$ is, meaning
that there should be more room to make $\soundness(f)$ small.

To this end, let us make the following definition.

\begin{definition}
  For a subset $\C \subseteq \{0,1\}^k$, we denote by $\SubHull(\C)$
  the optimum function $f: \{0,1\}^k \rightarrow \R^+$ of the
  following program\footnote{In the case when the optimum is not
    unique, we choose an arbitrary optimal $f$ as $\SubHull(\C)$.}:
  $$
  \begin{array}{ll}
    \text{Minimize} & \soundness(f) \\
    \text{Subject to} & \text{$f(x) \ge 1$ for every $x \in \C$} \\
    & \text{$f$ is submodular}
  \end{array}
  $$
  In addition, we write $\SubHull_p(\C)$ for the optimal $f$ when the
  objective to be minimized is changed to $\soundness_p(f)$ instead of
  $\soundness(f)$.  Analogously, we define $\SubHullSym(\C)$ and
  $\SubHullSym_p(\C)$ as the optimal $f$ with the additional
  restriction that $f$ is symmetric.
\end{definition}

While the objective function $\soundness(f)$ is not linear (or even
convex), it turns out that for the $\C$'s that we are interested in,
$\SubHull(\C)$ is actually quite well approximated by
$\SubHull_{1/2}(\C)$, i.e., we simply minimize $\sum_x f(x)$ (in fact,
we even believe that for our $\C$'s $\SubHull_{1/2}(\C)$ gives the
exact optimum for $\SubHull(\C)$, though we have not attempted to
prove it).  The advantage of considering $\SubHull_{1/2}(\C)$ is of course
that it is given by a linear program, which gives us a reasonably
efficient way of finding it.  Armed with this definition, let us now
describe the constructions we use.

\subsection{The Asymmetric Case}

The family of pairwise independent distributions $\mu$ that we
consider is a standard construction based on the Hadamard code.  Fix a
parameter $l > 0$ and let $k = 2^l-1$.  We identify the set of
coordinates $[k]$ with the set of non-empty subsets of $[l]$, in some
arbitrary way.  A string $x$ from the distribution $\mu$ is sampled as
follows: pick a uniformly random string $y \in \{0,1\}^l$ and
defining, for each $\emptyset \ne T \subseteq [l]$, the coordinate
$x_T = \bigoplus_{i \in T} y_i$.

This construction already has an issue: since the all-zeros string
$\vect{0}$ is in the support of the distribution, any submodular upper
bound to $\vect{1}_\mu$ must have $f(\vect{0}) \ge 1$, implying that
$\soundness_0(f) = 1$.  To fix this, we simply ignore $\vect{0}$ when
constructing $f$.  Formally, let $\C_l = \Supp(\mu) \setminus
\{\vect{0}\} \subseteq \{0,1\}^k$ be the $2^l-1$ strings in the
support of $\mu$ except $\vect{0}$.  Now we would like to take our
submodular function $f$ to be $\SubHull(\C)$, but we instead take it
to be $\SubHull_{1/2}(\C)$, as this function is much more easily
computed.

\begin{definition}
  For a parameter $l > 0$, let $k = 2^l-1$ and take $\C_l \subseteq
  \{0,1\}^k$ as above.  We define $f_l = \SubHull_{1/2}(\C_l)$.
\end{definition}

Note that using only $\C_l$ instead of the entire support costs us a
little in that the completeness is now reduced from $1$ to
$\completeness_\mu(f_l) \ge 1-2^{-l}$, but one can hope (and it indeed
turns out that this is the case) that this loss is compensated by a
greater improvement in soundness.

Also, we stress that $\soundness(f_l)$ is typically \emph{not} given
by the average $\soundness_{1/2}(f_l)$ (which is the quantity actually
minimized by $f_l$).  Indeed, the points in $\C_l$ all have Hamming
weight $(k+1)/2$ and this is also where $f_l$ is typically the
largest.  This causes $\soundness(f)$ to be achieved by the $p$-biased
distribution for some $p$ slightly larger than $1/2$. 

An obvious question to ask is whether using $\SubHull(\C_l)$ would give a
better result than using $\SubHull_{1/2}(\C_l)$.  For the values of $l$
that we have been able to handle, it appears that the answer to this
question is negative: computing $\SubHull_{p}(\C_l)$ for a $p$ that
approximately maximizes $\soundness_p(f_l)$ gives $f_l$, indicating
that we in fact have $f_l = \SubHull(\C_l)$.

\subsection{Symmetric Functions}

One way of constructing symmetric functions would be to use the exact
same construction as above but taking $\SubHullSym(\C_l)$ rather than
$\SubHull(\C_l)$.  However, that is somewhat wasteful, and we achieve
better results by also taking symmetry into account when constructing
the family of strings $\C$.

Thus, we alter the above construction as follows: rather than
identifying the coordinates with all non-empty subsets of $[l]$, we
identify them with all subsets of $[l]$ of odd cardinality.  In other
words, we take $k = 2^{l-1}$ and associate $[k]$ with all $T \subseteq
[l]$ such that $|T|$ is odd.  The resulting distribution $\mu$ is
symmetric in the sense that if $x$ is in the support then so is
$\overline{x}$.

In this case, both the all-zeros string $\vect{0}$ and the all-ones
string $\vect{1}$ are in the support which is not acceptable for the
same reason as above.  Hence, we construct a submodular function by
taking $\Csym_l = \Supp(\mu) \setminus \{\vect{0}, \vect{1}\}$ (note
that $|\Csym_l| = 2^l-2$).

\begin{definition}
  For a parameter $l > 0$, let $k = 2^{l-1}$ and take $\Csym_l
  \subseteq \{0,1\}^k$ as above.  We define $\fsym_l =
  \SubHullSym_{1/2}(\Csym_l)$.
\end{definition}

In this case, since we removed $2$ out of the $2^l$ points of the
support of $\mu$ to construct $\Csym_l$, we have that
$\completeness_\mu(\fsym_l) \ge 1 - 2^{1-l}$.  

An salient feature of $\fsym_l$ is that all strings of $\Csym_l$ have
Hamming weight exactly $k/2$.  By Lemma~\ref{lemma:symmetrybias}, this
causes $\soundness_p(\fsym_l)$ to be maximized by $p = 1/2$ (the
monotonicity of the function $a$ in Lemma~\ref{lemma:symmetrybias} is
not immediately clear).  This means that in the symmetric case, using
$\SubHullSym_{1/2}(\Csym_l)$ rather than $\SubHullSym(\Csym_l)$ is
provably without loss of generality.

\section{Concrete Bounds}
\label{sect:concrete}

Unfortunately, understanding the behaviour of the two families of
functions $f_l$ and $\fsym_l$ (or even just their soundnesses) for
large $l$ appears difficult.  There seems to be two conflicting forces
at work: on the one hand, $\C_l$ only has $2^l-1=k$ points so even
though $f_l$ is forced to be large on these there may still be plenty
of room to make it small elsewhere.  But on the other hand, since
$\C_l$ is a good code the elements of $\C_l$ are very pread out (their
pairwise Hamming distances are roughly $k/2$), which together with the
submodularity condition appears to force $f_l$ to be large.

In this section we study $f_l$ for small $l$, obtaining our hardness
results.  As discussed towards the end of the section, there are
indications that the inapproximability given by $f_l$ actually becomes
worse for large $l$ and that our results are the best possible for
this family of functions, but we do not yet know whether these
indications are correct.

\subsection{Symmetric Functions}

We start with the symmetric functions, as these are somewhat nicer
than the asymmetric ones in that their symmetry turn out to cause
$s(\fsym_l)$ to be achieved by $p=1/2$, i.e., $\soundness(\fsym_l)$
simply equals the average of $\fsym_l$.  Table~\ref{table:fsymsum}
gives a summary of the completeness, soundness, and inapproximability
obtained by $\fsym_l$ for $l \in \{3,4,5\}$.  We now describe these
functions in a more detail.

\begin{table}
  \centering
  \begin{tabular}{|c|c|c|ccc|}
    \hline
    $l$ & $\completeness$ & $\soundness(\fsym_l)$ & \multicolumn{3}{|c|}{Inapproximability $\soundness/\completeness$} \\
    \hline
    $3$ & 
    $3/4$ & $5/8$ & $5/6$ & $<$ & $0.8334$ 
    \\
    $4$ &
    $7/8$ & $43/64$ & $43/56$ & $<$ & $0.7679$ 
    \\
    $5$ & 
    $15/16$ & $709/1024$ & $709/960$ & $<$ & $0.7386$
    \\
    \hline
  \end{tabular}
  \caption{Behaviour of $\fsym_l$ for small $l$.}
  \label{table:fsymsum}
\end{table}

As a warmup, let us first describe the quite simple function $\fsym_4:
2^{[8]} \rightarrow [0,1]$ (we leave the even easier function
$\fsym_3$ to the interested reader).  Its definition is as follows:
$$
\fsym_4(S) = \left\{
  \begin{array}{ll}
    f(\overline{S})  & \text{if $|S| > 4$} \\
    |S|/4            & \text{if $|S| < 4$} \\
    1                & \text{if $|S| = 4$ and $S$ is in $\Csym_4$}\\
    3/4              & \text{otherwise}
  \end{array}
\right..
$$
That $\fsym_4(S)$ is submodular is easily verified.  It is also easy
to check that Lemma~\ref{lemma:symmetrybias} applies and therefore we
have that $\soundness(\fsym_4) = \soundness_{1/2}(\fsym_4)$, which is
straightforward to compute (note that $|\Csym_4| = 14$):
$$
\soundness_{1/2}(\fsym_4) = 2^{-8} \left(2 {8 \choose 1} \cdot \frac{1}{4} + 2 {8 \choose 2} \cdot \frac{2}{4} + 2 {8 \choose 3} \cdot \frac{3}{4} + 14 \cdot 1 + \left(\!{8 \choose 4}-14\right) \cdot \frac{3}{4} \right) = \frac{43}{64}
$$

Let us then move on to the next function $\fsym_5: 2^{[16]}
\rightarrow [0,1]$, giving an inapproximability of $0.7386$.  It turns
out that one can take $\fsym_5(S)$ to be a function of two simple
properties of $S$, namely its cardinality $|S|$, and the distance from
$S$ to $\Csym_5$.  Specifically, for $|S| \le 8$ let us define the
\emph{number of errors} $e(S)$ as the minimum number of elements that
must be removed from $S$ to get a subset of some set in $\Csym_5$.
Formally
$$
e(S) = \min_{C \in \Csym_5} |S \setminus C|,
$$
or equivalently, $d(S,\Csym_5) = 8-|S|+2e(S)$, where $d(S, \Csym_5)$
is the Hamming distance from the binary string corresponding to $S$ to
the nearest element in $\Csym_5$.  Table~\ref{table:fsym5} gives the
values of $\fsym_5$ for all $|S| \le 8$, and for $|S| > 8$ the value
of $\fsym_5(S)$ is given by $\fsym_5(\overline{S})$.  Note that, for
sets with $e(S) = 0$, i.e., no errors, $\fsym_5(S)$ is simply $|S|/8$,
which is what one would expect.  However, for sets with errors,
$\fsym_5(S)$ has a more complicated behaviour and it is far from clear
how this generalizes to larger $l$.

\begin{table}
  \centering
  \begin{tabular}{|c|ccccccccc|}
    \hline
    & \multicolumn{9}{|c|}{$|S|$} \\
    $e(S)$ & $0$ & $1$ & $2$ & $3$ & $4$ & $5$ & $6$ & $7$ & $8$ \\
    \hline
    0 & 0 & 1/8  & 2/8 & 3/8 & 4/8 & 5/8 & 6/8 & 7/8 & 1 \\
    1 & --& --& --& -- & -- & 19/32 & 22/32 & 24/32 & 26/32 \\
    2 & --& --& --& -- & -- & -- & 20/32 & 23/32 & 24/32 \\
    \hline
  \end{tabular}
  \caption{Description of $\fsym_5(S)$ as a function of $|S|$ and $e(S)$ for $|S| \le 8$.}
  \label{table:fsym5}
\end{table}

Veryfing that $\fsym_5$ is indeed submodular is not as straightforward
as with $\fsym_4$.  We have not attempted to construct a shorter proof
of this than simply checking condition (\ref{eqn:simplesubmod}) for
every $S$, $i$ and $j$, a task which is of course best suited for a
computer program (which is straightforward to write and runs in a few
seconds).

A computer program is also the best way to compute the soundness
$\soundness(\fsym_5)$.  It is almost obvious from inspection of
Table~\ref{table:fsym5} that $\fsym_5$ satisfies the monotonicity
condition of Lemma~\ref{lemma:symmetrybias} (the only possible source
of failure is that the table only implies that the average of
$\fsym_5$ on sets of size $6$ is between $20/32$ and $24/32$, and that
the average on sets of size $7$ is between $23/32$ and $28/32$).  It
turns out that the conditions of Lemma~\ref{lemma:symmetrybias} are
indeed satisfied and that the average of $\fsym_5$ is
$\soundness_{1/2}(\fsym_5) = 709/1024$.

Concluding this discussion on $\fsym_l$, it is tempting to speculate
on its behaviour for larger $l$.  We have made a computation of
$\fsym_6: 2^{[32]} \rightarrow [0,1]$, under the assumption that
$\fsym_6(S)$ only depends on $|S|$ and the multiset of distances to
every point of the support of $\Csym_6$.  Under this assumption, our
computations indicate that $\soundness(\fsym_6) \approx 0.7031$ giving
an inapproximatibility of $\soundness(\fsym_6)/(31/32) \approx
0.7258$, improving upon $\fsym_5$.  However, as these computations
took a few days they are quite cumbersome to verify (and we have not
even made a careful verification of them ourselves) and therefore we
do not claim this stronger hardness as a theorem.

\subsection{Asymmetric Functions}

We now return our focus to the asymmetric case.
Table~\ref{table:fsum} describes the hardness ratios obtained from
$f_l$ for the cases $l = 3$ and $l = 4$.

\begin{table}
  \centering
  \begin{tabular}{|c|c|c|c|}
    \hline
    $l$ & $\completeness$ & $\soundness(f_l)$ & Inapproximability $\soundness/\completeness$ \\
    \hline
    $3$ & 
    $7/8$ & $< 0.6275$ & $<0.7172$ 
    \\
    $4$ &
    $15/16$ & $< 0.6508$ & $< 0.6942$ 
    \\
    \hline
  \end{tabular}
  \caption{Behaviour of $f_l$ for small $l$.}
  \label{table:fsum}
\end{table}

We begin with the description of the function $f_3: 2^{[7]}
\rightarrow [0,1]$.  Similarly to the definition $e(S)$ used in the
description of $\fsym_5$, let us say that $S \subseteq [7]$ \emph{has
  no errors} if it is a subset or a superset of some $C \in \C_3$.  In
other words, if $|S| < 4$ it has no errors if it can be transformed to
a set in $\C_3$ by adding some elements, and if $|S| > 4$ it is has no
errors if it can be transformed to a codeword by removing some
elements.  The function $f_3$ is as follows:
$$
f_3(S) = \left\{
  \begin{array}{ll}
    |S|/4 & \text{if $|S| \le 4$ and has no errors} \\
    (7-|S|)/3 & \text{if $|S| > 4$ and has no errors}\\
    11/24 & \text{if $|S| = 3$ and has errors}\\
    17/24 & \text{if $|S| = 4$ and has errors}
  \end{array}\right.
$$
As with $\fsym_5$, it is not completely obvious that $f_3$ satisfies
the submodularity condition and there are a few cases to verify, best
left to a computer program.

The average of $f_3$ is $637/1024 \approx 0.622$.  However, since
$f_3$ takes on its largest values at sets of size $(k+1)/2 = 4$, the
$p$-biased average is larger than this for some $p > 1/2$.  It turns
out that $\soundness(f_4)$ is obtained by the $p$-biased distribution
for $p \approx 0.542404$, giving $\soundness(f_4) \approx 0.627434 <
0.6275$.



We are left with the description of $f_4: 2^{[15]} \rightarrow [0,1]$,
which is also the most complicated function yet.  One might hope that
$f_4$ shares the simple structure of the previous functions -- that it
depends only on $|S|$ and the distance of $S$ to the nearest $C \in
\C_4$.  However, the best function under this assumption turns out to
give a worse result than $f_3$.  Instead, $f_4$ depends on $|S|$ and
the multiset of distances to all elements of $\C_4$.  

To describe $f_4$, define for $S \subseteq [15]$ the multiset
$\mathcal{D}(S)$ as the multiset of distances to all the $15$ strings
in $\C_4$.  For instance, for $S = \emptyset$, $\mathcal{D}(S)$
consists of the number $8$ repeated $15$ times, reflecting the fact
that all strings of $\C_4$ have weight $8$, and for $S \in \C_4$ we
have that $\mathcal{D}(S)$ consists of the number $8$ repeated $14$
times, together with a single $0$, because the distance between any
pair of strings in $\C_4$ is $8$.

  \begin{table}
    \centering
    \begin{tabular}{|c|l|r|c|}
      \hline
      $|S|$ & $\mathcal{D(S)}$ & $\#S$ & $448 \cdot f_4(S)$ \\
      
      \hline 
      $0$ & $8^{15}$ & $1$ & $0$ \\ 
      \hline 
      $1$ & $7^{8}9^{7}$ & $15$ & $56$ \\ 
      \hline 
      $2$ & $6^{4}8^{8}10^{3}$ & $105$ & $112$ \\ 
      \hline 
      $3$ & $5^{2}7^{6}9^{6}11^{1}$ & $420$ & $168$ \\ 
      $3$ & $7^{12}11^{3}$ & $35$ & $138$ \\ 
      \hline 
      $4$ & $4^{2}8^{12}12^{1}$ & $105$ & $224$ \\ 
      $4$ & $4^{1}6^{4}8^{6}10^{4}$ & $840$ & $224$ \\ 
      $4$ & $6^{6}8^{6}10^{2}12^{1}$ & $420$ & $194$ \\ 
      \hline 
      $5$ & $3^{1}5^{1}7^{6}9^{6}11^{1}$ & $840$ & $280$ \\ 
      $5$ & $5^{5}9^{10}$ & $168$ & $280$ \\ 
      $5$ & $5^{3}7^{6}9^{4}11^{2}$ & $1680$ & $250$ \\ 
      $5$ & $5^{2}7^{8}9^{4}13^{1}$ & $315$ & $220$ \\ 
      \hline 
      $6$ & $2^{1}6^{3}8^{8}10^{3}$ & $420$ & $336$ \\ 
      $6$ & $4^{2}6^{3}8^{6}10^{4}$ & $1680$ & $306$ \\ 
      $6$ & $4^{1}6^{5}8^{6}10^{2}12^{1}$ & $2520$ & $276$ \\ 
      $6$ & $6^{9}10^{6}$ & $280$ & $276$ \\ 
      $6$ & $6^{6}8^{8}14^{1}$ & $105$ & $216$ \\ 
      \hline 
      $7$ & $1^{1}7^{7}9^{7}$ & $120$ & $392$ \\ 
      $7$ & $3^{1}5^{2}7^{5}9^{6}11^{1}$ & $2520$ & $332$ \\ 
      $7$ & $3^{1}7^{11}11^{3}$ & $420$ & $302$ \\ 
      $7$ & $5^{4}7^{5}9^{4}11^{2}$ & $2520$ & $302$ \\ 
      $7$ & $5^{3}7^{7}9^{4}13^{1}$ & $840$ & $272$ \\ 
      $7$ & $7^{14}15^{1}$ & $15$ & $197$ \\ 
      \hline
    \end{tabular}
    $\quad\quad\quad$
    \begin{tabular}{|c|l|r|c|}
      \hline
      $|S|$ & $\mathcal{D(S)}$ & $\#S$ & $448 \cdot f_4(S)$ \\
      \hline 
      $8$ & $0^{1}8^{14}$ & $15$ & $448$ \\ 
      $8$ & $2^{1}6^{4}8^{7}10^{3}$ & $840$ & $358$ \\ 
      $8$ & $4^{3}8^{11}12^{1}$ & $420$ & $328$ \\ 
      $8$ & $4^{2}6^{4}8^{5}10^{4}$ & $2520$ & $328$ \\ 
      $8$ & $4^{1}6^{6}8^{5}10^{2}12^{1}$ & $2520$ & $298$ \\ 
      $8$ & $6^{7}8^{7}14^{1}$ & $120$ & $253$ \\ 
      \hline 
      $9$ & $1^{1}7^{8}9^{6}$ & $105$ & $384$ \\ 
      $9$ & $3^{1}5^{2}7^{6}9^{5}11^{1}$ & $2520$ & $324$ \\ 
      $9$ & $5^{6}9^{9}$ & $280$ & $324$ \\ 
      $9$ & $5^{4}7^{6}9^{3}11^{2}$ & $1680$ & $294$ \\ 
      $9$ & $5^{3}7^{8}9^{3}13^{1}$ & $420$ & $279$ \\ 
      \hline 
      $10$ & $2^{1}6^{4}8^{8}10^{2}$ & $315$ & $320$ \\ 
      $10$ & $4^{2}6^{4}8^{6}10^{3}$ & $1680$ & $290$ \\ 
      $10$ & $4^{1}6^{6}8^{6}10^{1}12^{1}$ & $840$ & $275$ \\ 
      $10$ & $6^{10}10^{5}$ & $168$ & $260$ \\ 
      \hline 
      $11$ & $3^{1}5^{2}7^{6}9^{6}$ & $420$ & $256$ \\ 
      $11$ & $3^{1}7^{12}11^{2}$ & $105$ & $256$ \\ 
      $11$ & $5^{4}7^{6}9^{4}11^{1}$ & $840$ & $241$ \\ 
      \hline 
      $12$ & $4^{3}8^{12}$ & $35$ & $192$ \\ 
      $12$ & $4^{1}6^{6}8^{6}10^{2}$ & $420$ & $192$ \\ 
      \hline 
      $13$ & $5^{3}7^{8}9^{4}$ & $105$ & $128$ \\ 
      \hline 
      $14$ & $6^{7}8^{8}$ & $15$ & $64$ \\ 
      \hline 
      $15$ & $7^{15}$ & $1$ & $0$ \\ 
      \hline
    \end{tabular}
    \caption{Description of $f_4$}
    \label{tbl:f4}
  \end{table}

Table~\ref{tbl:f4} 
describes the behaviour of $f_4(S)$ as a function of $|S|$ and
$\mathcal{D}(S)$.\footnote{It is not necessary to include $|S|$ as
  it is uniquely determined by $\mathcal{D}(S)$, but we find that
  explicitly including $|S|$ makes the table somewhat less obscure.}
In the table $\mathcal{D}(S)$ is described by a string of the form
$d_1^{m_1}d_2^{m_2} \ldots$, with $d_1 < d_2 < \ldots$ and $\sum m_i
= 15$, indicating that $m_1$ strings of $\C_4$ are at distance $d_1$
from $S$, that $m_2$ strings are at distance $d_2$, and so on.
Thus, for $S = \emptyset$ the description of $\mathcal{D}(S)$ is
``$8^{15}$'', and for $S \in \C_4$ the description of $\mathcal{D}(S)$ is
``$0^18^{14}$''.

The $\#S$ column of Table~\ref{tbl:f4} gives the total number of $S
\subseteq [15]$ having this particular value of $(|S|,
\mathcal{D}(S))$, and the last column gives the actual value of $f_4$,
multiplied by $448$ to make all values integers.

Again, checking that $f_4$ is submodular is a tedious task best suited
for a computer.  The average of $f_4$ is $9519345/(448 \cdot 2^{15}) \approx
0.6485$, but, as with $f_3$, $\soundness(f_4)$ is somewhat larger than
this.  It turns out that the $p$ maximizing $\soundness_p(f_4)$ is
roughly $p \approx 0.526613$, and that $\soundness(f_4) \approx
0.650754 < 0.6508$.

Finally, we mention that as in the symmetric case, we have made a
computation of the next function, $f_5$, again under the assumption
that it depends only on the multiset of distances to the codewords.
Under this assumption it turns out that $\soundness_{1/2}(f_5) \approx
0.6743$, meaning that the inapproximability obtained can not be better
than $\soundness_{1/2}(f_5) / (31/32) \approx 0.6961$ which is worse
than the inapproximability obtained from $f_4$.

\section{Acknowledgments}
We are grateful to Jan Vondr\'{a}k for stimulating discussions.

\bibliographystyle{abbrv}
\bibliography{references}

\begin{appendix}

  \section{Proof of Theorem~\ref{thm:ughardness}}
  \label{sect:hardnessproof}

  To prove Theorem~\ref{thm:ughardness} we only give a
  \emph{dictatorship test} with certain properties.  The method of
  translating such a test into a hardness result under the UGC, going
  back to the results of Khot et al.\ \cite{khot07optimal} for
  $\textsc{Max Cut}$ is by now quite standard (see e.g.\
  \cite{raghavendra08optimal}).  

  \subsection{Background: Polynomials, Quasirandomness and Correlation Bounds}

  To set up the dictatorship test we need to mention some background
  material.

  A function $F: \{0,1\}^n \rightarrow \R$ is said to a be a
  \emph{dictator} if $G(x) = x_i$ for some $i \in [n]$, i.e., $G$ simply
  returns the $i$'th coordinate.

  Now, any function $F: \{0,1\}^n \rightarrow \R$ can be written
  uniquely as a multilinear polynomial $F(x) = \sum_{S \subseteq [n]}
  c_S x^S$ for some set of coefficients $c_S$, where $x^S := \prod_{i
    \in S} x_i$.  With this view there is an obvious extension of the
  domain of $F$ to $[0,1]^n$ (or even $\R^n$, but we shall only be
  interested in $[0,1]^n$).

  We say that such a polynomial is \emph{$(d, \tau)$-quasirandom} if for
  every $i \in [n]$ it holds that
  $$
  \sum_{\substack{i \in S \subseteq [n] \\ |S| \le d}} c_S^2 \le \tau.
  $$
  Note that a dictator is in some sense the extreme opposite of a
  $(d,\tau)$-quasirandom function as a dictator is not even
  $(1,\tau)$-quasirandom for $\tau < 1$.

  The main tool to obtain the soundness is the following ``noise
  correlation bound'' result of Mossel \cite{mossel08gaussian}
  (Theorem 6.6 and Lemma 6.9), which we state here in a simplified
  form in order to keep the amount of background necessary to a
  minimum.

  \begin{theorem}
    \label{thm:correlation_bound}
    Let $\epsilon > 0$ and let $\mu$ be a balanced pairwise independent probability
    distribution over $\{0,1\}^k$ such that $\mu(x) > 0$ for every $x
    \in \{0,1\}^k$.  Then there exists $d, \tau
    > 0$ such that the following holds for all $n$.  
    
    Let $F_1, \ldots, F_k: \{0,1\}^n \rightarrow [0,1]$ be
    $(d,\tau)$-quasirandom functions.  Then
    \begin{eqnarray*}
      \Bigg| \E_{w_1,\ldots,w_n}\left[ \prod_{i=1}^{k} F_i(w_{1,i}, \ldots, w_{n,i}) \right] - \prod_{i=1}^k \E[F_i] \Bigg| \le \epsilon,
    \end{eqnarray*}
    where $w_1, \ldots, w_n \in \{0,1\}^k$ are drawn independently from
    $\mu$ and $w_{i,j} \in \{0,1\}$ denotes the $j$th coordinate of
    $w_i$.
  \end{theorem}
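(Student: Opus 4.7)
The plan is to follow Mossel's invariance-principle route: reduce the Boolean statement to a Gaussian statement in which pairwise independence of $\mu$ upgrades automatically to full independence, and then use the product structure to factor the expectation. First I would Fourier-expand each $F_i$ in the $\pm 1$ basis as $F_i = \sum_S \hat F_i(S) \chi_S$, so that
$$\E\Bigl[\prod_i F_i(w_{\cdot,i})\Bigr] = \sum_{S_1,\ldots,S_k} \Bigl(\prod_i \hat F_i(S_i)\Bigr) \prod_j \hat\mu(T_j),$$
where $T_j = \{i : j \in S_i\}$ and $\hat\mu(T) = \E_{w \sim \mu}[\chi_T(w)]$. The target $\prod_i \E[F_i]$ is exactly the contribution of the all-empty tuple $S_1 = \cdots = S_k = \emptyset$, and since $\mu$ is pairwise independent, every configuration with some $|T_j| \in \{1,2\}$ contributes $0$. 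The task reduces to bounding the remaining terms, where every nonempty $T_j$ has size at least $3$ but $\hat\mu(T_j)$ can still be sizable.

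To control those terms I would proceed in two stages. First, apply the noise operator $T_{1-\eta}$ for a small $\eta > 0$, scaling each Fourier coefficient by $(1-\eta)^{|S|}$; hypercontractivity together with the bounded range of the $F_i$'s then lets me truncate to degree $|S| \le d$ with error $O(\epsilon)$, thereby pinning down the parameter $d = d(\eta,\epsilon)$. Second, for the resulting low-degree functions I would invoke Mossel's multidimensional invariance principle, which under the hypothesis that the low-degree influences $\sum_{i \in S,\,|S|\le d} \hat F_i(S)^2$ are each at most $\tau$ allows replacing the correlated Boolean vectors $w_j \sim \mu$ by jointly Gaussian vectors $g_j$ whose pairwise covariances match those of $\mu$, at the cost of a small error that tends to $0$ with $\tau$.

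In the Gaussian world the argument closes in a single line: pairwise independence of $\mu$ makes the coordinates of each $g_j$ pairwise uncorrelated, and jointly Gaussian uncorrelated variables are independent, so $\E[\prod_i F_i(g_{\cdot,i})]$ factors as $\prod_i \E[F_i(g_{\cdot,i})]$. Running the two preceding reductions backwards — first undoing the invariance swap, then undoing the noise smoothing — recovers $\prod_i \E[F_i]$ up to total error $\epsilon$, provided $\eta$ and $\tau$ are taken small enough and $d$ large enough in terms of $\epsilon$, $k$, and $\mu$.

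The main obstacle, and the reason this theorem is deep rather than routine, is the quantitative invariance step. Its error depends both on $k$ and on the minimum atom probability $\min_x \mu(x)$, which is exactly why the statement requires $\mu$ to be supported on all of $\{0,1\}^k$; shrinking that minimum would blow up the constants. Crucially, the bound must be independent of $n$, and this dimension-freeness is the nontrivial content — the Fourier bookkeeping and the hypercontractive truncation above are standard once the invariance estimate is in hand.
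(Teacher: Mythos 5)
The paper does not actually prove this theorem: it is stated as a simplified version of Mossel's noise-correlation bounds (Theorem~6.6 and Lemma~6.9 of the cited work) and then used as a black box. Your sketch is a faithful reconstruction of the route Mossel himself takes: Fourier-expand, observe that balancedness kills $\hat\mu(T_j)$ for $|T_j|=1$ and pairwise independence kills it for $|T_j|=2$, smooth with a noise operator to truncate to low degree, invoke a multi-function invariance principle to pass to jointly Gaussian inputs with the same covariance structure, and finally use that pairwise-uncorrelated jointly Gaussian coordinates are fully independent, so the product expectation factors (and by multilinearity each factor has the same first moment as on the cube). So the approach matches what the paper defers to.

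One point in your argument is worth sharpening. The truncation bound $\bigl|\E[\prod_i F_i] - \E[\prod_i T_{1-\eta} F_i]\bigr| = O(\epsilon)$ is not a consequence of ordinary single-function hypercontractivity together with boundedness; one is comparing a \emph{product} of distinct functions evaluated on a \emph{correlated} ensemble, and the right tool is the bound on the maximal correlation $\rho(\mu)<1$ of the correlated space $(\{0,1\}^k,\mu)$ together with the Efron--Stein decomposition (this is precisely the content of the ``Lemma 6.9'' the paper points to). The strict inequality $\rho(\mu)<1$, and the quantitative control of $\rho$ in terms of $\min_x \mu(x)$, is exactly where the hypothesis $\mu(x)>0$ for all $x$ enters; you gesture at this at the end as controlling the invariance constants, but it is the truncation step that genuinely requires it. With that lemma supplied, the rest of your outline is the standard argument.
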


  \subsection{Dictatorship Test}

  We now give the dictatorship test, which by the standard conversion
  from dictatorship tests to hardness implies
  Theorem~\ref{thm:ughardness}.  In the dictatorship test, the
  function $f: [0,1]^k \rightarrow [0,1]$ has the same role as the
  function $f: \{0,1\}^k \rightarrow \R^+$ in
  Theorem~\ref{thm:ughardness} -- as mentioned in the previous section
  we can take the unique multilinear extension to make the domain the
  entire $[0,1]^k$, and the range can be taken to be $[0,1]$ without
  loss of generality by simply scaling the function down.

  \begin{theorem}
    For every $\epsilon$ there are $d, \tau > 0$ such that the
    following holds.  Let $f: [0,1]^k \rightarrow [0,1]$ and $\mu$ be
    a balanced pairwise independent distribution over $\{0,1\}^k$.
    There is a dictatorship test $\mathcal{A}$, which when run on a
    function $F: \{0,1\}^n \rightarrow [0,1]$ has the following
    properties:
    \begin{enumerate}
    \item $\mathcal{A}$ queries $F$ in $k$ positions $x_1, \ldots, x_k
      \in \{0,1\}^n$ and then accepts with probability $f(F(x_1), \ldots,
      F(x_k))$.
    \item If $F$ is a dictator then $\mathcal{A}$ accepts with
      probability at least $\completeness_\mu(f) - \epsilon$.
    \item If $F$ is $(d,\tau)$-quasirandom then $\mathcal{A}$
      accepts with probability at most $\soundness(f) + \epsilon$.
    \end{enumerate}
  \end{theorem}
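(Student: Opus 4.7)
The plan is to implement the standard dictatorship test derived from $\mu$ and analyze its completeness and soundness using Theorem~\ref{thm:correlation_bound}.

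\textbf{The test.} I would construct $\mathcal{A}$ as follows: draw $w_1, \ldots, w_n \in \{0,1\}^k$ independently from $\mu$, and for $j \in [k]$ set $x_j = (w_{1,j}, \ldots, w_{n,j}) \in \{0,1\}^n$. Query $F(x_1), \ldots, F(x_k)$ and accept with probability $f(F(x_1), \ldots, F(x_k))$. This is well-defined because $f: [0,1]^k \to [0,1]$.

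\textbf{Completeness.} If $F$ is the $i$-th dictator, then $F(x_j) = x_{j,i} = w_{i,j}$, so $(F(x_1), \ldots, F(x_k)) = w_i$, which is distributed exactly as $\mu$. Hence $\mathcal{A}$ accepts with probability $\E_{w \sim \mu}[f(w)] = \completeness_\mu(f)$.

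\textbf{Soundness.} Write the multilinear expansion $f(y) = \sum_{S \subseteq [k]} c_S \prod_{i \in S} y_i$. By linearity of expectation,
\[
\E[f(F(x_1), \ldots, F(x_k))] = \sum_{S \subseteq [k]} c_S \cdot \E\!\left[\prod_{i \in S} F(x_i)\right].
\]
Since $\mu$ is balanced, each marginal $w_{m,i}$ is uniform on $\{0,1\}$, so each $x_i$ is uniform on $\{0,1\}^n$; let $p := \E[F(x_1)]$ denote the common value. For each fixed $S$ I apply Theorem~\ref{thm:correlation_bound} with $F_i = F$ for $i \in S$: quasirandomness of $F$ yields $\bigl|\E[\prod_{i \in S} F(x_i)] - p^{|S|}\bigr| \le \epsilon'$ for any prescribed $\epsilon' > 0$, provided $d,\tau$ are chosen sufficiently small (depending on $\mu$ and $\epsilon'$). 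Summing over the $2^k$ subsets, with $|c_S|$ bounded by a constant depending only on $k$, gives
\[
\bigl|\E[f(F(x_1), \ldots, F(x_k))] - f(p, \ldots, p)\bigr| \le O_k(\epsilon').
\]
By multilinearity $f(p,\ldots,p) = \E_{y \sim \biaspk}[f(y)] = \soundness_p(f) \le \soundness(f)$, so choosing $\epsilon'$ small enough makes the total error at most $\epsilon$.

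\textbf{Main obstacle.} The only subtle point is that Theorem~\ref{thm:correlation_bound} as stated requires $\mu$ to be fully supported, which is not part of our hypothesis. I would handle this by replacing $\mu$ with $\mu' := (1-\delta)\mu + \delta U$, where $U$ is uniform on $\{0,1\}^k$ and $\delta \le \epsilon/2$. Since a convex combination of balanced pairwise independent distributions is balanced pairwise independent, $\mu'$ has the required properties and has full support; completeness decreases by at most $\delta$ (because $f \in [0,1]$, so $\completeness_{\mu'}(f) \ge (1-\delta)\completeness_\mu(f) \ge \completeness_\mu(f) - \delta$), and the soundness analysis above goes through unchanged with $\mu'$ in place of $\mu$. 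The resulting dictatorship test then yields Theorem~\ref{thm:ughardness} via the standard conversion referenced at the start of the appendix.
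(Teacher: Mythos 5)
Your proposal is correct and essentially identical to the paper's proof: the same test (columns of a $k\times n$ matrix drawn from the smoothed $\mu'=(1-\delta)\mu+\delta\,\mathcal{U}$, accept with probability $f$ applied to $F$ on the rows), the same completeness argument via a dictator selecting a column, and the same soundness argument via the multilinear expansion of $f$ and Mossel's correlation bound applied term by term to reduce to the $p$-biased expectation. The only cosmetic difference is that the paper builds the test on $\mu'$ from the start rather than introducing the full-support patch at the end.
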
  

  \begin{proof}
    Let $\mu'$ be the distribution over $\{0,1\}^k$ defined by
    $$
    \mu' = (1-\epsilon) \mu + \epsilon \mathcal{U},
    $$
    where $\mathcal{U}$ denotes the uniform distribution (in other
    words, a sample from $\mu'$ is obtained by sampling from $\mu$
    with probability $1-\epsilon$ and otherwise, with probability $\epsilon$,
    taking a uniformly random element of $\{0,1\}^k$).  Note that
    $\mu'$ is also balanced pairwise independent, and more importantly
    it satisfies $\mu'(x) > 0$ for all $x \in \{0,1\}^k$ which will
    allow us to apply Theorem~\ref{thm:correlation_bound}.

    Now the test $\mathcal{A}$ is as follows:
    \begin{itemize}
    \item Pick a random $k$-by-$n$ matrix $X$ over $\{0,1\}$ by letting each
      column be a sample from $\mu'$, independently.
    \item Let $x_1, \ldots, x_k \in \{0,1\}^n$ be the rows of $X$ and
      let $F(X) = (F(x_1), \ldots, F(x_k)) \in \{0,1\}^k$ be the values
      of $F$ on these $k$ points.
    \item Accept with probability $f(F(X))$.
    \end{itemize}

    The first property of $\mathcal{A}$ is clear from its definition.
    For the completeness property, note that if $F$ is a dictator then
    $F(X) \in \{0,1\}^k$ is just some column of $X$
    and therefore distributed according to $\mu'$, so that
    $$
    \E[f(F(X))] = \E_{x \sim \mu'}[f(x)] =
    (1-\epsilon)\E_{x \sim \mu}[f(x)] + \epsilon \E_{x \sim \mathcal{U}}[f(x)]
    \ge \E_{x \sim \mu}[f(x)] - \epsilon = \completeness_\mu(f) - \epsilon.
    $$

    We now turn to the soundness property of $\mathcal{A}$.  Let
    $\epsilon' = \epsilon / 2^k$ and let $d$ and $\eta$ be given by
    Theorem~\ref{thm:correlation_bound} with parameter $\epsilon'$ and
    the distribution $\mu'$.  

    Now consider the multilinear expansion $f(x) = \sum_{S \subseteq
      [k]} c_S x_S$ of $f$ and let us analyze the expectation of
    $f(F(X))$ term by term.  If $F$ is $(d,\tau)$-quasirandom then by
    Theorem~\ref{thm:correlation_bound} (letting $F_i = F$ for $i \in
    S$ and letting $F_i$ be the constant one function for $i \not \in S$) we have
    $$
    \left|\E[\prod_{i \in S} F(x_i)] - \prod_{i \in S} \E[F]\right| \le \epsilon'.
    $$
    Let $p = \E[F]$ be the bias of the function $F$.  Then, $\prod_{i
      \in S} \E[F] = p^{|S|}$ equals the expectation of $x^S$ under the
    $p$-biased distribution.  Summing over all $S$ we obtain
    $$
    \E[f(F(X))] \le \sum_{S \subseteq [k]} c_S \E_{x \sim
      \biaspk}[x^S] + 2^k \epsilon' = \E_{x \sim \biaspk}[f(x)] +
    \epsilon = \soundness_p(f) + \epsilon \le \soundness(f) +
    \epsilon,
    $$
    giving the desired soundness property.
  \end{proof}

\end{appendix}

\end{document}